%
%
\documentclass[10pt]{article}
\usepackage{amssymb,amsmath}
\usepackage{amsthm}
\usepackage{mathrsfs}  
\usepackage{graphicx}
\usepackage{dcolumn}
\usepackage{bm}
\usepackage{fullpage}
\usepackage{color}
\usepackage[all]{xy}
\begin{document}
\newtheorem{Def}{Definition}[section]
\newtheorem{Thm}{Theorem}[section]
\newtheorem{Proposition}{Proposition}[section] 
\newtheorem{Lemma}{Lemma}[section]
\theoremstyle{definition}
\newtheorem*{Proof}{Proof}
\newtheorem{Example}{Example}
\newtheorem{Postulate}{Postulate}[section]
\newtheorem{Corollary}{Corollary}[section]
\newtheorem{Remark}{Remark}[section]
\theoremstyle{remark} 
\newcommand{\beq}{\begin{equation}}
\newcommand{\beqa}{\begin{eqnarray}}
\newcommand{\eeq}{\end{equation}}
\newcommand{\eeqa}{\end{eqnarray}}
\newcommand{\non}{\nonumber}
\newcommand{\fr}[1]{(\ref{#1})}
\newcommand{\cc}{\mbox{c.c.}}
\newcommand{\nr}{\mbox{n.r.}}
\newcommand{\eq}{\mathrm{eq}}
\newcommand{\B}{\mathrm{B}}
\newcommand{\Ising}{\mathrm{Ising}}
\newcommand{\heatbath}{\mathrm{heat\, bath}}
\newcommand{\ext}{\mathrm{ext}}
\newcommand{\atan}{\mathrm{Tan}{}^{-1}}
\newcommand{\atanh}{\mathrm{Tanh}{}^{-1}}
\newcommand{\acosh}{\mathrm{Cosh}{}^{-1}}
\newcommand{\bb}{\mbox{\boldmath {$b$}}}
\newcommand{\bbe}{\mbox{\boldmath {$e$}}}
\newcommand{\bt}{\mbox{\boldmath {$t$}}}
\newcommand{\bn}{\mbox{\boldmath {$n$}}}
\newcommand{\br}{\mbox{\boldmath {$r$}}}
\newcommand{\bC}{\mbox{\boldmath {$C$}}}
\newcommand{\bH}{\mbox{\boldmath {$H$}}}
\newcommand{\bp}{\mbox{\boldmath {$p$}}}
\newcommand{\bx}{\mbox{\boldmath {$x$}}}
\newcommand{\bF}{\mbox{\boldmath {$F$}}}
\newcommand{\bT}{\mbox{\boldmath {$T$}}}
\newcommand{\bomega}{\mbox{\boldmath {$\omega$}}}
\newcommand{\ve}{{\varepsilon}}
\newcommand{\e}{\mathrm{e}}
\newcommand{\F}{\mathrm{F}}
\newcommand{\Loc}{\mathrm{Loc}}
\newcommand{\Ree}{\mathrm{Re}}
\newcommand{\Imm}{\mathrm{Im}}
\newcommand{\hF}{\widehat F}
\newcommand{\hL}{\widehat L}
\newcommand{\tA}{\widetilde A}
\newcommand{\tB}{\widetilde B}
\newcommand{\tC}{\widetilde C}
\newcommand{\tL}{\widetilde L}
\newcommand{\tK}{\widetilde K}
\newcommand{\tX}{\widetilde X}
\newcommand{\tY}{\widetilde Y}
\newcommand{\tU}{\widetilde U}
\newcommand{\tZ}{\widetilde Z}
\newcommand{\talpha}{\widetilde \alpha}
\newcommand{\te}{\widetilde e}
\newcommand{\tv}{\widetilde v}
\newcommand{\ts}{\widetilde s}
\newcommand{\tx}{\widetilde x}
\newcommand{\ty}{\widetilde y}
\newcommand{\ud}{\underline{\delta}}
\newcommand{\uD}{\underline{\Delta}}
\newcommand{\chN}{\check{N}}
\newcommand{\cA}{{\cal A}}
\newcommand{\cB}{{\cal B}}
\newcommand{\cC}{{\cal C}}
\newcommand{\cD}{{\cal D}}
\newcommand{\cF}{{\cal F}}
\newcommand{\cH}{{\cal H}}
\newcommand{\cI}{{\cal I}}
\newcommand{\cL}{{\cal L}}
\newcommand{\cM}{{\cal M}}
\newcommand{\cN}{{\cal N}}
\newcommand{\cO}{{\cal O}}
\newcommand{\cP}{{\cal P}}
\newcommand{\cR}{{\cal R}}
\newcommand{\cS}{{\cal S}}
\newcommand{\cY}{{\cal Y}}
\newcommand{\cZ}{{\cal Z}}
\newcommand{\cU}{{\cal U}}
\newcommand{\cV}{{\cal V}}
\newcommand{\dr}{\mathrm{d}}
\newcommand{\sech}{\mathrm{sech}}
\newcommand{\Exp}{\mathrm{Exp}}
 \newcommand{\GamLamM}[1]{{\Gamma\Lambda^{{#1}}\,\cal{M}}}
 \newcommand{\GTM}{{\Gamma T\cal{M}}}
\newcommand{\inp}[2]{\left\langle\,  #1\, , \, #2\, \right\rangle}
\newcommand{\equp}[1]{\overset{\mathrm{#1}}{=}}
\newcommand{\wt}[1]{\widetilde{#1}}
\newcommand{\wh}[1]{\widehat{#1}}
\newcommand{\ch}[1]{\check{#1}}
\newcommand{\ii}{\imath}
\newcommand{\ic}{\iota}
\newcommand{\scrH}{\mathscr{H}}
\newcommand{\mi}{\,\mathrm{i}\,}
\newcommand{\mr}{\,\mathrm{r}\,}
\newcommand{\mbbC}{\mathbb{C}}
\newcommand{\mbbE}{\mathbb{E}}
\newcommand{\mbbR}{\mathbb{R}}
\newcommand{\mbbZ}{\mathbb{Z}}
\newcommand{\ol}[1]{\overline{#1}}
\newcommand{\rmC}{\mathrm{C}}
\newcommand{\rmH}{\mathrm{H}}
\newcommand{\Id}{\mathrm{Id}} 
\newcommand{\avg}[1]{\left\langle\,{#1}\, \right\rangle}
\newcommand{\Leg}{\mathbb{L}}
\newcommand{\avgg}[1]{\left\langle\langle\,{#1}\, \rangle\right\rangle}
\title{
Hessian-information geometric 
formulation of a class of deterministic neural network models 
}
\author{  Shin-itiro GOTO,\\ 
The Institute of Statistical Mathematics,\\ 
10-3 Midori-cho, Tachikawa, Tokyo 190-8562, Japan
} 
\date{\today}
\maketitle
\begin{abstract}%
In this paper a class of dynamical systems 
describing deterministic neural network models  
are formulated from a viewpoint of differential geometry.
This class includes the 
Hopfield model 
and gradient systems, and 
is such that the so-called activation functions induce 
information and Hessian geometries. 
In this formulation, 
it is shown that the phase space compressibility of 
a dynamical system belonging to this class 
is written in terms of the 
Laplace operator defined on Hessian manifolds, where 
phase space compressibility is associated with 
a volume-form of a manifold, 
and expresses how such a volume-form is compressed along the 
vector field of a dynamical system.   
Since the sigmoid function, as an activation function, 
plays a role in the study of 
neural network models, such compressibility 
is explicitly calculated for this case.
Throughout this paper, the so-called dual coordinates known in 
information geometry are explicitly used. 
\end{abstract}%
\section{Introduction}
Neural network models play fundamental roles in brain science to  
clarify functions of human brains \cite{Geszti1990,Amit1989}. 
They can be viewed as 
a class of dynamical systems that could be either probabilistic or 
deterministic,  
and can mathematically be studied. 
Considerable activity is being devoted to the comprehension of 
dynamics of genuine neural network models and their variants.  
Although models involving probability seem to be natural, 
analysis of deterministic models is expected to provide simple perspective
due to simplicity of the models. 
Aside from its purely academic interest, its resolution 
has implications in mathematical engineering. 
In particular neural network models are intensively employed in
machine learning systems \cite{Bishop2006}.
  
Various approaches exist to analyze dynamical systems.  
One of such is to employ differential geometry \cite{Arnold, Silva2008}, 
since 
various geometric notions can be introduced 
and can systematically be applied   
to dynamical systems. 
Among them, one approach is to apply information geometry,  
where information geometry is a geometrization of statistics for 
parametric  models with which 
cumulant generating functions are associated \cite{AN,Ay2017}.
One of essential roles of information geometry is to bridge  
convex analysis and Riemannian geometry, where 
tools in convex analysis includes the Legendre transform. 
Hessian geometry is such a differential geometry involving convex functions 
and it is not necessary to involve probability distribution functions 
\cite{Shima2007}.
Thus, if systems involving convex functions 
are nothing to do with probability, then 
it is expected that Hessian geometry is a key to develop geometrization for 
those systems. Such examples of geometrization  
include Hamiltonian dynamical systems 
and electric circuit models \cite{Goto2016,GW2018}.

For deterministic neural network models involving convex functions, 
it should be expected that convex analysis and its related geometries, 
Hessian and information geometries, play roles.   
To see this explicitly, a class of dynamical systems should be focused.
A candidate of such a class   
is the one proposed by Cohen and Grossberg \cite{Cohen1983}. 
This class includes the 
Hopfield model \cite{Hopfield1984}, and   
this class has been 
intensively studied in the literature  since 
it has Lyapunov functions \cite{Grossberg1988}. 
On the other hand one of the keys in the study of neural network models 
is to choose the so-called activation function. 
However the use of 
convex analysis and its differential geometry based on 
activation functions of this model 
have not been focused.
Thus a formulation based on 
Hessian and information geometries in conformity with a given convex 
activation function is expected to prove fruitful.

In this paper a class of dynamical systems 
including the 
 Hopfield 
model for describing a neural network model
is formulated in terms of Hessian and information geometries. 
In this formulation a coordinate free description of the class of 
dynamical systems is given, from which the quantity called 
phase space compressibility is shown to be written as the negative of the 
Laplace operator acting on 
a Lyapunov function. 
This quantity expresses how much a volume-form is compressed along the  
vector field associated with a dynamical system, and 
is a measure for how fast flow in phase space of the 
dynamical system converges to an attracting set.

This paper is organized as follows. 
In Section\,\ref{section-model}, 
a class of dynamical systems, which we call the generalized 
Hopfield model, 
is introduced.  After a coordinate free description of the class 
is given, phase space compressibility is calculated. 
In Section\,\ref{section-examples},
explicit expressions of phase space compressibility for 
some examples are given. 
Finally, Section\,\ref{section-conclusion} summarizes 
this paper and discusses some future works.

In this paper mathematical objects are assumed to be smooth and real.

\section{Generalized Hopfield model}
\label{section-model}

In this section a generalized 
 Hopfield 
model is introduced, and then 
its geometric formulation is given. 

Let $\cM$ be an $n$-dimensional manifold, 
$U$ a set of local coordinates on $\cM$ with 
$U=\{U^{\,1},\ldots,U^{\,n}\}$, and $\Psi$ a strictly convex function on $\cM$.
Strictly convexity of $\Psi$ is written as  
$(\partial^{2}\psi/\partial \,U^{\,a}\partial U^{\,b})\succ 0$  
in some convex domain of $\cM$, where 
$(A_{\,ab})\succ 0$ denotes that the matrix $(A_{\,ab})$ is positive definite.
Introduce $V$ with $V=\{V_{\,1},\ldots,V_{\,n}\}$ such that  
\beq
V_{\,a}
=\frac{\partial\Psi}{\partial U^{\,a}}.
\label{V-U-psi}
\eeq

Consider the dynamical system of the form 
\beq
\frac{\dr}{\dr t}U^{\,a}
=F^{\,a}(V),\qquad 
F^{\,a}(V)
=-\,\frac{\partial\cH}{\partial V_{\,a}},
\label{generalized-Cohen-Grossberg-model}
\eeq
where $t\in I\subseteq\mbbR$ plays a role of time, 
and 
$\cH$ is a Lyapunov function:
\beq
\frac{\dr\cH}{\dr t}
=-\,\sum_{a}\sum_{b}\frac{\dr U^{\,a}}{\dr t}
\frac{\partial^{\,2}\,\Psi}{\partial U^{\,a}\partial U^{\,b}}
\frac{\dr U^{\,b}}{\dr t}
<0.
\label{generalized-Cohen-Grossberg-model-H-differential}
\eeq
This system is termed in this paper as follows.
\begin{Def}
The dynamical system \fr{generalized-Cohen-Grossberg-model} together with 
\fr{V-U-psi} and \fr{generalized-Cohen-Grossberg-model-H-differential} is referred to as the 
generalized 
Hopfield 
model.
\end{Def}
The following are some examples.
\begin{Example}
\label{example-simplest}
Choosing $\cH$ to be  
$\cH(V)=\sum_{a}\sum_{b}\delta^{\,ab}\,V_{\,a}V_{\,b}/2$ with $\delta^{\,ab}$ being 
the Kronecker delta giving unity for $a=b$ otherwise vanishes,  
one has a gradient system:
$$
\frac{\dr}{\dr t}U^{\,a}
=-\sum_b\delta^{\,ab}\,V_{\,b}
=-\,\sum_{b}\delta^{\,ab}\,\frac{\partial\,\Psi}{\partial U^{\,b}}.
$$
\end{Example}
\begin{Example}
\label{example-Hopfield}
A neural network model having a Lyapunov function 
found in the literature is of the form of the set of 
differential equations 
$$
\frac{\dr}{\dr t}U^{\,a}
=\sum_{b}J^{\,ab}\,V_{\,b}
-\frac{1}{R_{\,a}}U^{\,a}+I_{\,\ext}^{\,a},\qquad 
V_{\,a}
=\Upsilon(U^{\,a}),\quad 
\Upsilon(U^{\,a})
:=\frac{\partial\, \Psi}{\partial U^{\,a}}
=\frac{\dr\, \psi}{\dr U^{\,a}},\quad
\Psi(U)
=\sum_{a}\psi(U^{\,a}),
$$
where $\{R_{\,a}\}$ and $\{I_{\,a}^{\,\ext}\}$  
are sets of constants,  $\{J^{\,ab}\}$ constants satisfying $J^{\,ba}=J^{\,ab}$,
and $\psi$ is a strictly convex function 
( See Refs.\,\cite{Hopfield1984,Cohen1983} ). 
Recall that the sum of 
strictly convex functions is also a strictly convex function.  
Note that all the constants $\{C_{\,i}\}$ appeared in 
Ref.\,\cite{Hopfield1984} 
have been set to unity in this paper. 
A Lyapunov function is known to exist in this model. 
Choose $\cH$ to be 
\beq
\cH(V)
=-\,\sum_{a}\left[\frac{1}{2}\sum_{b}
J^{\,ab}V_{\,a}V_{\,b}
-\int_{0}^{V_{\,a}}\frac{1}{R_{\,a}}\Upsilon^{\,-1}(V^{\,\prime})
\,\dr V^{\,\prime}
+V_{\,a}I_{\,\ext}^{\,a}\,\right],
\label{example-Cohen-Grossberg-H}
\eeq
where we have assumed that $\Upsilon^{-1}$ exists. Then it follows from 
$$
\frac{\partial\cH}{\partial V_{\,a}}
=-\left[\,\sum_{b}J^{\,ab}\,V_{\,b}-\frac{1}{R_{\,a}}U^{\,a}
+I_{\,\ext}^{\,a}\,\right]
=-\,\frac{\dr}{\dr t}U^{\,a}
$$
that 
$$
\frac{\dr\cH(V)}{\dr t}
=\sum_{a}\frac{\partial \cH}{\partial V_{\,a}}\frac{\dr V_{\,a}}{\dr t}
=\sum_{a}\frac{\partial \cH}{\partial V_{\,a}}
\frac{\dr \Upsilon}{\dr U^{\,a}}
\frac{\dr U^{\,a}}{\dr t}
=-\,\sum_{a}\frac{\dr U^{\,a}}{\dr t}
\frac{\dr \Upsilon}{\dr U^{\,a}}
\frac{\dr U^{\,a}}{\dr t}
<0.
$$
Thus the generalized 
Hopfield 
model includes the 
 Hopfield
model.
\end{Example}
\subsection{Geometric description}
A geometric description of the generalized Hopfield model is described in this 
subsection.  
First, how the strictly convex function $\Psi$ induces a Hessian manifold 
is shown.
Hessian manifold is a triplet $(\cM,\nabla^{\,\Psi},g^{\,\Psi})$, where 
$\nabla^{\,\Psi}$ is a flat connection, and $g^{\,\Psi}$  the Riemannian metric 
tensor field satisfying $g^{\,\Psi}=\nabla^{\,\Psi}\dr\Psi$. 
The connection is such that 
$\nabla_{\partial/\partial U^{\,a}}^{\,\Psi}(\partial/\partial U^{\,b})=0$ 
for a local coordinate set $U$. 
 
Given $\Psi$, define the Riemannian metric tensor field as 
\beq
g^{\,\Psi}
=\sum_{a}\sum_{b}\,g_{\,ab}^{\,\Psi}\,
\dr U^{\,a}\otimes\dr U^{\,b},
\qquad\mbox{where}\qquad 
g_{\,ab}^{\,\Psi}
=\frac{\partial^{2}\,\Psi}{\partial U^{\,a}\partial U^{\,b}}.
\label{metric-psi}
\eeq
As in \fr{V-U-psi}, the function $\Psi$ of $U$ induces $V$. 

With \fr{V-U-psi} and \fr{metric-psi}, one has
$$
g_{\,ab}^{\,\Psi}
=\frac{\partial V_{\,a}}{\partial U^{\,b}}
=\frac{\partial^{2}\,\Psi}{\partial U^{\,a}\partial U^{\,b}},
$$
which form the matrix $(g_{\,ab}^{\,\Psi})$. 
The inverse matrix of $(g_{\,ab}^{\,\Psi})$, denoted by $(g_{\,\Psi}^{\,ab})$, 
is known to be written 
as \cite{AN}
$$
g_{\,\Psi}^{\,ab}
=\frac{\partial U^{\,a}}{\partial V_{\,b}}
=\frac{\partial^{\,2}\Psi^{\,*}}{\partial V_{\,a}\partial V_{\,b}},
$$
where $\Psi^{\,*}$ is the total Legendre transform of $\Psi$:
$$
\Psi^{\,*}(V)
:=\left[\sum_{a}U^{\,a}V_{\,a}-\Psi(U)\right]_{U=U(V)}.
$$
Combining these, one has 
\beq
g^{\,\Psi}
=\sum_{a}\sum_{b}g_{\,ab}^{\,\Psi}\,\dr U^{\,a}\otimes\dr U^{\,b}
=\sum_{a}\sum_{b}g_{\,\Psi}^{\,ab}\,\dr V_{\,a}\otimes\dr V_{\,b}.
\label{metric-psi-U-V}
\eeq
The coordinates $U$ and $V$ are dual in the sense of information 
geometry \cite{AN} :  
$$
g^{\,\Psi}\left(\,
\frac{\partial}{\partial U^{\,a}},
\frac{\partial}{\partial V_{\,b}}
\,\right)
=\delta_{\,a}^{\,b}.
$$
From $(\cM,\nabla^{\,\Psi},g^{\,\Psi})$, one can uniquely 
introduce another connection 
denoted $\nabla^{\,\Psi\,*}$ that satisfies 
$$
X[g^{\,\Psi}(Y,Z)]
=g^{\,\Psi}(\nabla_{\,X}^{\,\Psi}Y,Z)+
g^{\,\Psi}(Y,\nabla_{\,X}^{\,\Psi\,*}Z), 
$$ 
for all vector fields $X,Y$ and $Z$.  
Then it turns out that 
the connection $\nabla^{\,\Psi\,*}$ is such that 
$\nabla_{\partial/\partial V_{\,a}}^{\,\Psi\,*}{\partial/\partial V_{\,b}}=0$. 
A quadruplet $(\cM,g^{\,\Psi},\nabla^{\,\Psi},\nabla^{\,\Psi\,*})$ 
is referred to as a dually flat space \cite{AN}.  
Since $(\cM,\nabla^{\,\Psi},g^{\,\Psi})$ is a type of Riemannian manifold 
$(\cM,g^{\,\Psi})$ induced from $\Psi$, 
a canonical volume-form is defined. To emphasize 
how this volume-form is induced,  
this volume-form is denoted $\star_{\,\Psi}1$ in this paper.
Associated with $\star_{\,\Psi}1$, the Hodge map 
$\star_{\,\Psi}:\GamLamM{p}\to\GamLamM{n-p}$ is defined, where 
$\GamLamM{p}$ is the space of $p$-forms on $\cM$.

An expression for the generalized 
Hopfield 
 model 
is written in terms of $g^{\,\Psi}$ as follows.

\begin{Lemma}
\label{main-lemma}
The generalized 
Hopfield 
model is written as the components of 
a vector field $X_{\,\cH}$ on 
the Riemannian manifold $(\cM,g^{\,\Psi})$ satisfying
\beq
g^{\,\Psi}(X_{\,\cH},-)
=-\,\dr \cH,
\label{geometric-generalized-Cohen-Grossberg-model-1}
\eeq 
where $g^{\,\Psi}$ has been defined in \fr{metric-psi}. 
\end{Lemma}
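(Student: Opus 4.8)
The plan is to unwind the coordinate-free condition \fr{geometric-generalized-Cohen-Grossberg-model-1} in the $U$-coordinates and check that it reproduces exactly the defining equations \fr{generalized-Cohen-Grossberg-model}. First I would write the unknown vector field as $X_{\,\cH}=\sum_{c}(X_{\,\cH})^{\,c}\,\partial/\partial U^{\,c}$ and pair it, via the metric \fr{metric-psi}, against the coordinate basis vector $\partial/\partial U^{\,b}$; this gives $g^{\,\Psi}(X_{\,\cH},\partial/\partial U^{\,b})=\sum_{c}(X_{\,\cH})^{\,c}\,g_{\,cb}^{\,\Psi}$. On the right-hand side, $-\,\dr\cH$ evaluated on $\partial/\partial U^{\,b}$ is $-\,\partial\cH/\partial U^{\,b}$; since $\cH$ is a function of $V$ and $V_{\,a}=\partial\Psi/\partial U^{\,a}$, the chain rule gives $\partial\cH/\partial U^{\,b}=\sum_{a}(\partial\cH/\partial V_{\,a})\,(\partial V_{\,a}/\partial U^{\,b})=\sum_{a}(\partial\cH/\partial V_{\,a})\,g_{\,ab}^{\,\Psi}$, where the last equality is the identity $g_{\,ab}^{\,\Psi}=\partial V_{\,a}/\partial U^{\,b}$ recorded just before the lemma.

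Equating the two sides yields $\sum_{c}(X_{\,\cH})^{\,c}\,g_{\,cb}^{\,\Psi}=-\sum_{a}(\partial\cH/\partial V_{\,a})\,g_{\,ab}^{\,\Psi}$ for every $b$. Because $\Psi$ is strictly convex, the matrix $(g_{\,ab}^{\,\Psi})$ is positive definite, hence invertible; contracting both sides with the inverse metric $(g_{\,\Psi}^{\,ab})$ then forces $(X_{\,\cH})^{\,a}=-\,\partial\cH/\partial V_{\,a}=F^{\,a}(V)$. Thus the components of $X_{\,\cH}$ are precisely the right-hand sides of \fr{generalized-Cohen-Grossberg-model}, so the integral curves of $X_{\,\cH}$ satisfy $\dr U^{\,a}/\dr t=F^{\,a}(V)$; conversely the same computation run backwards shows that the vector field built from \fr{generalized-Cohen-Grossberg-model} satisfies \fr{geometric-generalized-Cohen-Grossberg-model-1}. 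This also makes the Lyapunov property transparent, since $\dr\cH/\dr t=\dr\cH(X_{\,\cH})=g^{\,\Psi}(X_{\,\cH},X_{\,\cH})\cdot(-1)$... more precisely $\dr\cH/\dr t=-\,g^{\,\Psi}(X_{\,\cH},X_{\,\cH})\le 0$, recovering \fr{generalized-Cohen-Grossberg-model-H-differential}, though this last remark is a bonus rather than part of what must be proved.

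There is no serious obstacle here: the argument is essentially a one-line application of nondegeneracy of a Riemannian metric together with the chain rule connecting the dual coordinates $U$ and $V$. The only point requiring a modicum of care is bookkeeping the raising and lowering of indices consistently — in particular remembering that $\dr\cH$ is naturally a one-form whose $U$-components involve the metric through $\partial V_{\,a}/\partial U^{\,b}=g_{\,ab}^{\,\Psi}$, which is exactly why the metric factors cancel and leave the clean expression $F^{\,a}=-\,\partial\cH/\partial V_{\,a}$. I would also note explicitly that well-definedness of $X_{\,\cH}$ (existence and uniqueness) is guaranteed precisely by the positive-definiteness of $g^{\,\Psi}$, i.e. by strict convexity of $\Psi$, so that the lemma is not merely a reformulation but records the geometric content that the Hessian metric is the natural object turning $-\,\dr\cH$ into the generalized Hopfield vector field.
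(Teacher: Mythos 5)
Your proposal is correct and follows essentially the same route as the paper's own proof: expand both sides of $g^{\,\Psi}(X_{\,\cH},-)=-\,\dr\cH$ in the $U$-coordinates, use the chain rule together with the identity $\partial V_{\,a}/\partial U^{\,b}=g_{\,ab}^{\,\Psi}$, and invoke nondegeneracy of the Hessian metric to cancel the metric factors and obtain $\dot{U}^{\,a}=-\,\partial\cH/\partial V_{\,a}$. The added remarks on well-definedness of $X_{\,\cH}$ and the Lyapunov property are consistent with the paper and do not change the argument.
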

\begin{Proof}
Write $X_{\,\cH}$ in terms of $\{U^{\,a}\}$ as  
$$
X_{\,\cH}
=\sum_{a}\dot{U}^{\,a}\frac{\partial}{\partial U^{\,a}},\qquad\mbox{with}\qquad
\dot{U}^{\,a}
=\frac{\dr U^{\,a}}{\dr t}.
$$
Then, substituting
$$
g^{\,\Psi}(X_{\,\cH},-)
=\sum_{a}\sum_{b}
g_{\,ab}^{\,\Psi}\,\dot{U}^{\,a}\dr U^{\,b},\quad\mbox{and}\quad
\dr\,\cH
=\sum_{a}\sum_{b}\frac{\partial\cH}{\partial V_{\,a}}
\frac{\dr V_{\,a}}{\dr U^{\,b}}
\dr U^{\,b}
=\sum_{a}\sum_{b}g_{\,ab}^{\,\Psi}\frac{\partial\cH}{\partial V_{\,a}}
\dr U^{\,b},
$$
into \fr{geometric-generalized-Cohen-Grossberg-model-1}, using the property of  
the pairing $\dr U^{\,b}(\partial/\partial U_{\,a})=\delta^{\,ab}$, and 
$g_{\,ab}^{\,\Psi}=g_{\,b a}^{\,\Psi}$, one has
$$
\sum_{b}g_{\,ab}^{\,\Psi}\frac{\dr U^{\,a}}{\dr t}
=-\sum_{b}\,g_{\,ab}^{\,\Psi}\,\frac{\partial\cH}{\partial V_{\,b}}.
$$
Since $g^{\,\Psi}$ is non-degenerate, one has 
$$
\frac{\dr U^{\,a}}{\dr t}
=-\,\frac{\partial\cH}{\partial V_{\,a}}.
$$
\qed
\end{Proof}
From this Lemma, a generalized  
Hopfield 
model is a triplet $(\cM,\Psi,\cH)$  
in this geometric setting, and can be viewed as a dynamical system on the 
dually flat space $(\cM,g^{\,\Psi},\nabla^{\,\Psi},\nabla^{\,\Psi\,*})$.
In the literature,  several dynamical systems have been studied
in dually flat spaces \cite{Nakamura1993,Fujiwara-Amari1995,Goto2016}. 
In the so-called statistical manifolds, which are manifolds generalized from 
dually flat spaces,   
several dynamical systems theories have been 
considered \cite{Noda2011,Leok2017,Goto2018}.

This Lemma will be used to calculate the phase space compressibility.
This quantity is associated with a volume-form, and expresses how 
such a volume-form is compressed along a vector field associated with 
a dynamical system. If there is an attractor in phase space for 
a dynamical system, then   
roughly speaking, phase space compressibility is  a measure for expressing 
how fast flow of a dynamical system converges to an attracting set in the 
phase space. This is defined as follows:
\begin{Def}
( Phase space compressibility \cite{Ezra2002} ) :  
Let $\cM$ 
be an $n$-dimensional manifold, $X$ a vector field on $\cM$, 
and $\Omega$ a non-vanishing $n$-form. Introduce the one-form 
$\kappa_{\,\Omega}$ such that
$\cL_{X}\Omega = \kappa_{\Omega}(X) \Omega$, where 
$\cL_{X}$ is the Lie derivative along a vector field $X$. 
Then $\kappa_{\,\Omega}(X)$ 
is referred to as a phase space compressibility.
\end{Def}
Examples of phase space compressibility are as follows.
\begin{enumerate}
\item
(Vector field associated with a linear dynamical system):
Consider the vector field $X$ on $\mbbR^{\,n}$ associated with the 
linear dynamical system, 
$$
X=\sum_{a=1}^{n}\dot{q}^{\,q}\frac{\partial}{\partial q^{\,a}},
\qquad
\dot{q}^{\,a}
=-\,\check{\kappa}^{\,a}\,q^{\,q},\qquad a\in\{1,\ldots,n\}
$$ 
 where $\check{\kappa}^{\,a}>0$ is constant for each $a$. Choose 
$\Omega=\dr q^{\,1}\wedge\cdots\wedge \dr q^{\,n}$ as a non-vanishing $n$-form.
It follows from 
$$
\cL_{X}\Omega
=\left[\,(\cL_{X}\dr q^{\,1})\wedge \cdots\wedge
\dr q^{\,n}\,\right]+\cdots
+\left[\,\dr q^{\,1}\wedge\cdots\wedge(\cL_{X}\dr q^{\,n})\,\right], 
$$ 
and 
$\cL_{X}\dr q^{\,a}=\dr (Xq^{\,a})=\dr \dot{q}^{\,a}=-\check{\kappa}^{\,a}\dr q^{\,a}$
that 
$$
\cL_{\,X}\Omega
=-\,\left(\,\sum_{a=1}^{n}\check{\kappa}^{\,a}\,\right)\,\Omega,
$$
from which 
$\kappa_{\Omega}(X)=-\,\sum_{a=1}^{n}\check{\kappa}^{\,a}<0$.
\item
( Hamiltonian vector field ) :   
Let $(\cS,\omega)$ be an $2n$-dimensional symplectic manifold, $(q,p)$
a Darboux coordinate system so that 
$\omega=\sum_{a=1}^{n}\dr p_{\,a}\wedge \dr q^{\,a}$ with 
 $q=\{q_{\,1},\ldots,q_{\,n}\}$ and $p=\{p^{\,1},\ldots,p^{\,n}\}$, 
and $H$ a Hamiltonian function. The Hamiltonian vector field $X_{\,H}$ is 
the vector field satisfying $\ii_{X_{\,H}}\omega=-\,\dr H$. It then follows that 
$\cL_{X_{\,H}}\omega=0$. Choose 
$\Omega=\omega\wedge\cdots\wedge\omega$ as a non-vanishing $2n$-form on 
$\cS$. 
Since $\cL_{X_{\,H}}\Omega=0$, one concludes that $\kappa_{\Omega}(X_{\,H})=0$.
\end{enumerate}
Recall that on a Riemannian manifold $(\cM,g)$, the (Hodge) Laplace 
operator acting on a function is defined as 
$\star^{\,-1}\,\dr\,\star\,\dr:\GamLamM{0}\to\GamLamM{0}$, where $\star$ is 
the Hodge operator and $\star^{\,-1}$ its inverse on $\cM$. 
Then, the main theorem in this paper is as follows.
\begin{Thm}
\label{fact-phase-space-compressibility}
( Phase space compressibility for the generalized 
Hopfield model ) : 
Let $\star_{\,\Psi} 1$ be a canonical volume-form on $(\cM,g^{\,\Psi})$, 
$\star_{\,\Psi}$ the Hodge map, $\star_{\,\Psi}^{\,-1}$ its inverse map, 
and $X_{\,\cH}$ a vector field 
satisfying \fr{geometric-generalized-Cohen-Grossberg-model-1}.
Then, the phase space compressibility 
$\kappa_{\,\Psi}^{\,\cH}:=\kappa_{\,\star_{\,\Psi} 1}(X_{\,\cH})$ 
is given by the negative of the Laplace operator acting on the function $\cH$, 
$$
\kappa_{\,\Psi}^{\,\cH}
=-\,\star_{\,\Psi}^{-1}\dr \star_{\,\Psi}\dr \cH.
$$  
\end{Thm}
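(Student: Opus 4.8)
The plan is to recognize $\kappa_{\,\Psi}^{\,\cH}$ as the Riemannian divergence of $X_{\,\cH}$ on $(\cM,g^{\,\Psi})$ and to rewrite that divergence through the Hodge map, so that Lemma~\ref{main-lemma} can be inserted directly. The three ingredients are: (i) Cartan's formula $\cL_{X}=\dr\,\ii_{X}+\ii_{X}\,\dr$ together with the fact that a top-degree form is closed; (ii) the identity $\ii_{X}(\star_{\,\Psi}1)=\star_{\,\Psi}\big(g^{\,\Psi}(X,-)\big)$ relating the interior product with the volume-form to the Hodge image of the metric-dual one-form; and (iii) the defining relation $g^{\,\Psi}(X_{\,\cH},-)=-\,\dr\cH$ of Lemma~\ref{main-lemma}.

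First I would apply Cartan's formula to $\Omega=\star_{\,\Psi}1$ and $X=X_{\,\cH}$, obtaining $\cL_{X_{\,\cH}}(\star_{\,\Psi}1)=\dr\,\ii_{X_{\,\cH}}(\star_{\,\Psi}1)+\ii_{X_{\,\cH}}\,\dr(\star_{\,\Psi}1)$. Since $\star_{\,\Psi}1\in\GamLamM{n}$ is of top degree on the $n$-dimensional manifold $\cM$, we have $\dr(\star_{\,\Psi}1)=0$, hence $\cL_{X_{\,\cH}}(\star_{\,\Psi}1)=\dr\,\ii_{X_{\,\cH}}(\star_{\,\Psi}1)$; comparing with the definition of phase space compressibility this reads $\kappa_{\,\Psi}^{\,\cH}\,(\star_{\,\Psi}1)=\dr\,\ii_{X_{\,\cH}}(\star_{\,\Psi}1)$. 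Next I would invoke ingredient (ii) — which I would either quote as a standard property of the Hodge operator or verify in a local positively oriented $g^{\,\Psi}$-orthonormal coframe, where both sides reduce to the same contraction of the $n$-form against $X_{\,\cH}$ — and then substitute Lemma~\ref{main-lemma} to get $\ii_{X_{\,\cH}}(\star_{\,\Psi}1)=\star_{\,\Psi}\big(g^{\,\Psi}(X_{\,\cH},-)\big)=-\,\star_{\,\Psi}\dr\cH$. Plugging this into the previous display yields $\kappa_{\,\Psi}^{\,\cH}\,(\star_{\,\Psi}1)=-\,\dr\,\star_{\,\Psi}\dr\cH$. Finally, since $\star_{\,\Psi}$ is a pointwise, hence $C^{\infty}(\cM)$-linear, isomorphism on each $\GamLamM{p}$ with $\star_{\,\Psi}^{-1}(\star_{\,\Psi}1)=1$, applying $\star_{\,\Psi}^{-1}$ to both sides produces $\kappa_{\,\Psi}^{\,\cH}=-\,\star_{\,\Psi}^{-1}\dr\,\star_{\,\Psi}\dr\cH$, the asserted formula.

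The step I expect to require the most care is the sign bookkeeping in ingredient (ii) and in the companion relation $\star_{\,\Psi}^{-1}=(-1)^{\,p(n-p)}\star_{\,\Psi}$ on $\GamLamM{p}$: even though $g^{\,\Psi}$ is positive definite, the degree-dependent signs still appear, so I would fix a single convention for $\star_{\,\Psi}1$ (the positively oriented Riemannian volume-form, with the ordering of $\dr U^{\,1}\wedge\cdots\wedge\dr U^{\,n}$ fixed) at the outset and propagate it consistently through the contraction identity. Apart from that the argument is routine; it is worth noting that the resulting formula is a pointwise identity of smooth functions on $\cM$, since the Laplace operator acts here on the genuine function $\cH$, so no issues of boundary terms or completeness of $\cM$ enter.
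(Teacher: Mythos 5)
Your proposal is correct and follows essentially the same route as the paper's proof: Cartan's formula with the closedness of the top-degree form, the identity $\ii_{X}\star_{\,\Psi}1=\star_{\,\Psi}\,\wt{X}$ for the metric dual, and the substitution of Lemma~\ref{main-lemma} to get $\ii_{X_{\,\cH}}\star_{\,\Psi}1=-\,\star_{\,\Psi}\dr\cH$, followed by applying $\star_{\,\Psi}^{-1}$. Your extra care about orientation and degree-dependent signs is fine but not needed beyond what the paper already assumes.
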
 
\begin{Proof} 
Introduce the notation $\wt{X}_{\,\cH}=g^{\,\Psi}(X_{\,\cH},-)$, which is 
the metric dual of $X_{\,\cH}$.
To proceed, we use the formula 
$$
\ii_X\star 1
=\star\, \wt{X},
$$
for any vector field $X$ on a Riemannian manifold, 
where $\star$ is a Hodge map and 
$\ii_{\,X}$ the interior product operator with a vector field $X$. 
Then, 
it follows from \fr{geometric-generalized-Cohen-Grossberg-model-1} and 
the formula that 
$\ii_{X_{\,\cH}}\star_{\,\Psi}1=\star_{\,\Psi}\, \wt{X}_{\,\cH}=-\star_{\,\Psi}\dr \cH$. 
Also, from definition of $\kappa_{\,\Psi}^{\,\cH}$, it follows that  
$\kappa_{\,\Psi}^{\,\cH}\star_{\,\Psi} 1=\cL_{\,X_{\,\cH}}\star_{\,\Psi} 1$.

With these and the Cartan formula $\cL_{X}\beta=(\dr\ii_{X}+\ii_X\dr)\beta$ for 
any $p$-form $\beta$ with $0\leq p\leq \dim\cM$, one has that 
$$
\kappa_{\,\Psi}^{\,\cH}
=\star_{\,\Psi}^{-1}(\,\kappa_{\,\Psi}^{\,\cH}\star_{\,\Psi} 1\,)
=\star_{\,\Psi}^{-1}(\,\cL_{\,X_{\,\cH}}\star_{\,\Psi} 1\,)
=\star_{\,\Psi}^{-1}\,(\dr \ii_{\,X_{\,\cH}}\star_{\,\Psi} 1\,)
=\star_{\,\Psi}^{-1}\,\left[\,\dr\, (-\star_{\,\Psi}\dr \cH\,)\,\right]
=-\,\star_{\,\Psi}^{-1}\dr \star_{\,\Psi}\dr \cH.
$$
\qed
\end{Proof} 
There are some consequences from Lemma\,\ref{main-lemma} and  
Theorem\,\ref{fact-phase-space-compressibility}. 
\begin{itemize}
\item
It follows from \fr{geometric-generalized-Cohen-Grossberg-model-1} that
\beq
\dr\wt{X}_{\,\cH}
=0.
\label{divergenceless-XH}
\eeq
\item
For the case that $\partial\cM=\emptyset$, it follows from the Stokes theorem 
that 
$$
\int_{\cM}\kappa_{\,\Psi}^{\,\cH}\star_{\,\Psi}1
=0.
$$
\item
The existence of Lyapunov function can be expressed in terms of the metric 
tensor field as 
$$
\frac{\dr\cH}{\dr t}
=X_{\,\cH}\cH
=\dr\cH(X_{\,\cH})
=-\,g^{\,\Psi}(X_{\,\cH},X_{\,\cH})
<0,
$$
for a non-vanishing vector field $X_{\,\cH}$. 
\item
In terms of the coordinate set $V$, 
$$
X_{\,\cH}
=\sum_a\dot{V}_{\,a}\frac{\partial}{\partial V_{\,a}},
$$
one has 
$$
\sum_{b}g_{\,\Psi}^{\,ab}\dot{V}_{\,b}
=-\,\frac{\partial\cH}{\partial V_{\,a}}.
$$
\end{itemize}

The phase space compressibility $\kappa_{\,\Psi}^{\cH}$ 
is also related to the co-derivative and 
the (Hodge) Laplace operator acting on the one-form 
$\wt{X}_{\,\cH}$. To state these explicitly, recall that 
the co-derivative acting on a $p$-form 
$\dr_{\,\Psi}^{\,\dagger}:\GamLamM{p}\to\GamLamM{p-1}$ 
is defined as 
$\dr_{\,\Psi}^{\,\dagger}=\star^{-1}\,\dr\,\star$, and that 
the Laplace operator acting on a $p$-from on $\cM$ is defined as 
$\dr\,\dr^{\,\dagger}+\dr^{\,\dagger}\,\dr:\GamLamM{p}\to\GamLamM{p}$, 
( See for example, Ref.\,\cite{Nakahara1990} ).
Then, one has the following. 

\begin{Proposition}
$$
\dr_{\,\Psi}^{\,\dagger}\,\wt{X}_{\,\cH}
=\kappa_{\,\Psi}^{\,\cH},\qquad\mbox{and}\qquad
\qquad
(\,\dr\,\dr_{\,\Psi}^{\,\dagger}+
\dr_{\,\Psi}^{\,\dagger}\,\dr \,)\,\wt{X}_{\,\cH}
=\dr\kappa_{\,\Psi}^{\,\cH},
$$
where $\dr_{\,\Psi}^{\,\dagger}:\GamLamM{p}\to\GamLamM{p-1}$    
is the co-derivative, 
$\dr_{\,\Psi}^{\,\dagger}:=\star_{\Psi}^{-1}\dr\star_{\,\Psi}$. 
\end{Proposition}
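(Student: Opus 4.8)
The plan is to reduce everything to the identity $\wt{X}_{\,\cH}=-\,\dr\cH$, which is immediate from \fr{geometric-generalized-Cohen-Grossberg-model-1}, together with Theorem\,\ref{fact-phase-space-compressibility} and the vanishing \fr{divergenceless-XH}. No genuinely hard step is involved; the content is bookkeeping of the Hodge-theoretic definitions already set up in the text.

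For the first identity I would apply the definition $\dr_{\,\Psi}^{\,\dagger}:=\star_{\,\Psi}^{-1}\dr\star_{\,\Psi}$ directly to the one-form $\wt{X}_{\,\cH}$. Substituting $\wt{X}_{\,\cH}=-\,\dr\cH$ gives $\dr_{\,\Psi}^{\,\dagger}\wt{X}_{\,\cH}=-\,\star_{\,\Psi}^{-1}\dr\star_{\,\Psi}\dr\cH$, and the right-hand side is precisely $\kappa_{\,\Psi}^{\,\cH}$ by Theorem\,\ref{fact-phase-space-compressibility}. One point that deserves a remark here is the sign convention for the co-derivative on forms of varying degree: since the expression $\star_{\,\Psi}^{-1}\dr\star_{\,\Psi}$ is applied only to the function $\cH$ and (below) to an exact one-form whose exterior derivative vanishes, the degree-dependent signs appearing in other references (e.g. Ref.\,\cite{Nakahara1990}) never actually enter, so the paper's convention $\dr_{\,\Psi}^{\,\dagger}:=\star_{\,\Psi}^{-1}\dr\star_{\,\Psi}$ is unambiguous in this computation. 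I would state this explicitly.

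For the Hodge-Laplacian identity I would use \fr{divergenceless-XH}, namely $\dr\wt{X}_{\,\cH}=0$ (equivalently $\dr\wt{X}_{\,\cH}=-\,\dr\dr\cH=0$ by $\dr^{2}=0$). Consequently the term $\dr_{\,\Psi}^{\,\dagger}\dr\wt{X}_{\,\cH}$ vanishes, and the Laplace operator $\dr\,\dr_{\,\Psi}^{\,\dagger}+\dr_{\,\Psi}^{\,\dagger}\,\dr$ acting on the one-form $\wt{X}_{\,\cH}$ collapses to $\dr\,\dr_{\,\Psi}^{\,\dagger}\wt{X}_{\,\cH}$. Applying the first identity then yields $\dr\,\dr_{\,\Psi}^{\,\dagger}\wt{X}_{\,\cH}=\dr\kappa_{\,\Psi}^{\,\cH}$, which is the claimed second identity. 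The only obstacle worth flagging is the consistency of conventions just discussed; everything else is a direct substitution.
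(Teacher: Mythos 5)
Your proposal is correct and follows essentially the same route as the paper: substitute $\wt{X}_{\,\cH}=-\,\dr\cH$ into the definition $\dr_{\,\Psi}^{\,\dagger}=\star_{\,\Psi}^{-1}\dr\star_{\,\Psi}$ and invoke Theorem\,\ref{fact-phase-space-compressibility} for the first identity, then use \fr{divergenceless-XH} to reduce the Laplacian to $\dr\,\dr_{\,\Psi}^{\,\dagger}\wt{X}_{\,\cH}$ for the second. Your side remark on the degree-dependent sign convention of the co-derivative is a harmless addition and does not alter the argument.
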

\begin{proof}
A proof is completed by straightforward calculations.   
For the first equality, 
it follows from $\wt{X}_{\,\cH}=-\,\dr \cH$ that 
$$
\dr_{\,\Psi}^{\,\dagger}\,\wt{X}_{\,\cH}
=\star_{\,\Psi}^{-1}\dr\star_{\,\Psi}\,\wt{X}_{\,\cH}
=-\,\star_{\,\Psi}^{-1}\dr\star_{\,\Psi}\,\dr \cH\,
$$
The most right hand side of the equation above 
is written in terms of $\kappa_{\,\Psi}^{\,\cH}$ due to 
Theorem\,\ref{fact-phase-space-compressibility} :  
$$
\dr_{\,\Psi}^{\,\dagger}\,\wt{X}_{\,\cH}
=\kappa_{\,\Psi}^{\,\cH}.
$$
For the second equality it follows from  \fr{divergenceless-XH} and 
the proven first equality that 
$$
(\,\dr\,\dr_{\,\Psi}^{\,\dagger}+
\dr_{\,\Psi}^{\,\dagger}\,\dr \,)\,\wt{X}_{\,\cH}
=\dr\,\dr_{\,\Psi}^{\,\dagger}
\,\wt{X}_{\,\cH}
=\dr\kappa_{\,\Psi}^{\,\cH}.
$$
\end{proof}

Before closing this subsection, 
it is argued how this geometric formulation is applied to other 
dynamical systems. 
It has been known that the Hopfield model belongs  a class of 
dynamical systems  proposed in Ref.\,\cite{Cohen1983} 
( See also \cite{Grossberg1988} ). 
Consider the dynamical system of the form,
\beq
\frac{\dr U^{\,a}}{\dr t}
=\left(B^{\,a}-\sum_{j=1}^{n}C^{\,aj}\frac{\dr\psi}{\dr U^{\,j}}\right)
\,A^{\,a},
\qquad a\in\{1,\ldots,n\}
\label{restricted-Cohen-Grossberg}
\eeq 
where
\begin{itemize}
\item
$A^{\,a}$ is a positive function of $U^{\,a}$, $A^{\,a}=A^{\,a}(U^{\,a})$,  
\item
$B^{\,a}$ is a function of $U^{\,a}$, $B^{\,a}=B^{\,a}(U^{\,a})$,
\item
$C^{\,ab}$ forms a symmetric constant matrix, $C^{\,ab}=C^{\,ba}$,  
\item
$\psi$ is a strictly convex function depending on one variable, so that  
$\dr^{\,2}\psi/\dr \xi^{\,2}>0$.
\end{itemize}
The class of the form \fr{restricted-Cohen-Grossberg} 
is obtained by restricting the class proposed in Ref.\,\cite{Cohen1983}.
The Lyapunov function $\cH^{\,\prime}$ was found as  
$$
\cH^{\,\prime}(U)
=-\sum_{j=1}^{n}\int^{U^{\,j}}B^{\,j}(\xi)\,
\frac{\dr^2\psi}{\dr \xi^{\,2}}\dr \xi
+\frac{1}{2}\sum_{j=1}^{n}\sum_{k=1}^{n}
C^{\,jk}\frac{\dr\psi}{\dr U^{\,j}}\frac{\dr\psi}{\dr U^{\,k}}.
$$
From this $\cH^{\,\prime}$, one has 
$$
\frac{\partial\cH^{\,\prime}}{\partial U^{\,j}}
=\left(\,-B^{\,j}+\sum_{k=1}^{n}C^{\,jk}\frac{\dr\psi}{\dr U^{\,k}}
\,\right)\,\frac{\dr^{2}\psi}{\dr U^{\,j\,2}}
=-\,\frac{\dr^{2} \psi}{\dr U^{\,j\,2}}\frac{1}{A^{\,j}}\frac{\dr U^{\,j}}{\dr t},
$$
from which 
$$
\frac{\dr\cH^{\,\prime}}{\dr t}
=\sum_{j=1}^{n}\frac{\dr \cH^{\,\prime}}{\dr U^{\,j}}\frac{\dr U^{\,j}}{\dr t}
=-\,\sum_{j=1}^{n}\frac{\dr^{2} \psi}{\dr U^{\,j\,2}}\frac{1}{A^{\,j}}
\left(\frac{\dr U^{\,j}}{\dr t}\right)^{\,2}
<0.
$$
Then, \fr{restricted-Cohen-Grossberg} can be written as 
\beq 
\frac{\dr U^{\,a}}{\dr t}
=-\,A^{\,a}\frac{\partial \cH^{\,\prime}}{\partial V_{\,a}},\qquad
\mbox{with}\qquad V_{\,a}
=\frac{\dr \psi}{\dr U^{\,a}}, 
\label{restricted-Cohen-Grossberg-with-Lyapunov}
\eeq
due to 
$$
\frac{\partial \cH^{\,\prime}}{\partial V_{\,a}}
=\sum_{j=1}^{n}\frac{\partial\cH^{\,\prime}}{\partial U^{\,j}}
\frac{\dr U^{\,j}}{\dr V_{\,a}}
=-\sum_{j=1}^{n}\frac{\dr^{2} \psi}{\dr U^{\,j\,2}}\frac{1}{A^{\,j}}
\frac{\dr U^{\,j}}{\dr t}\delta^{\,ja}
\left(\frac{\dr^2\psi}{\dr U^{\,j\,2}}\,\right)^{\,-1}
=-\,\frac{1}{A^{\,a}}\frac{\dr U^{\,a}}{\dr t}.
$$
Notice that \fr{restricted-Cohen-Grossberg-with-Lyapunov}
is not written as  
\fr{generalized-Cohen-Grossberg-model-H-differential}, unless $A^{\,a}$  
is constant.

\subsection{Coordinate expression of phase space compressibility}
In this subsection the coordinate expression of $\kappa_{\,\Psi}^{\,\cH}$ 
is given. In particular the case 
\beq
\Psi(U)
=\sum_{a}\psi(U^{\,a}),
\label{Psi-sum-psi}
\eeq
is focused.

Recall that the coordinate expression of the 
Laplace operator on a Riemannian manifold $(\cN,g)$  
with $g$ being a Riemannian metric tensor field
$$
g=\sum_{a}\sum_{b}g_{\,ab}\,\dr x^{\,a}\otimes\dr x^{\,b}.
$$
The Laplacian operator acting on a function $f$ is then 
\beq
\star^{-1}\dr \star \dr f
=\sum_{a}\sum_{b}\frac{1}{\sqrt{|g|}}\frac{\partial}{\partial x^{\,a}}
\left(\sqrt{|g|}\,g^{\,ab}\frac{\partial f}{\partial x^{\,b}}\right)
=\sum_{a}\sum_{b}\left[\,\frac{\partial(\ln\sqrt{|g|})}{\partial x^{\,a}}
g^{\,ab}\frac{\partial f}{\partial x^{\,b}}
+\frac{\partial}{\partial x^{\,a}}\left(\,
g^{\,ab}\frac{\partial f}{\partial x^{\,b}}
\,\right)\,\right],
\label{Laplacian-function-formula}
\eeq
where $\star$ is the Hodge map associated with $g$, 
$|g|$ the determinant of the matrix $(g_{\,ab})$, and 
$(g^{\,ab})$ the inverse matrix of $(g_{\,ab})$. 

A coordinate expression of $\kappa_{\,\Psi}^{\,\cH}$ 
is calculated by applying \fr{Laplacian-function-formula} 
to Theorem\ref{fact-phase-space-compressibility}.
Each term in \fr{Laplacian-function-formula} is calculated as follows. 
Since 
$$
g_{\,ab}^{\,\Psi}
=\frac{\dr^{2}\psi}{\dr U^{\,a\,2}}\delta_{\,ab},
\qquad
|\,g^{\,\Psi}\,|
=\prod_{a=1}^n\frac{\dr^2\,\psi}{\dr U^{\,a\,2}},\qquad 
\mbox{and}\qquad 
g_{\,\Psi}^{\,ab}
=\delta^{\,ab}\frac{1}{\frac{\dr^2\,\psi}{\dr U^{\,a\,2}}},
$$
one has 
\beqa
\frac{\partial(\,\ln\sqrt{|g^{\,\Psi}|}\,)}{\partial U^{\,a}}
&=&\frac{1}{2}\frac{\partial}{\partial U^{\,a}}\sum_{b=1}^n\ln\left(\,
\frac{\dr^2\,\psi}{\dr U^{\,b\,2}}
\,\right)
=\frac{1}{2}\frac{1}{\frac{\dr^2\,\psi}{\dr U^{\,a\,2}}}
\frac{\dr^3\,\psi}{\dr U^{\,a\,3}},
\non\\
\frac{\partial(\,\ln\sqrt{|g^{\,\Psi}|}\,)}{\partial U^{\,a}}
\,g_{\,\Psi}^{\,ab}\,
\frac{\partial\,\cH}{\partial U^{\,b}}
&=&
=\frac{1}{2}\delta^{\,ab}\frac{1}{\left(\frac{\dr^2\,\psi}{\dr U^{\,a\,2}}\,\right)^{\,2}}
\frac{\dr^3\,\psi}{\dr U^{\,a\,3}}
\frac{\partial\,\cH}{\partial U^{\,a}},
\non
\eeqa  
and
$$
\frac{\partial}{\partial U^{\,a}}\left(\,
g_{\,\Psi}^{\,ab}\frac{\partial\cH}{\partial U^{\,b}}\,\right)
=\delta^{\,ab}\frac{\partial}{\partial U^{\,a}}\left(\,
\frac{1}{\frac{\dr^2\,\psi}{\dr U^{\,a\,2}}}
\frac{\partial\cH}{\partial U^{\,b}}
\,\right)
=\delta^{\,ab}\left[\,
-\,\frac{\frac{\dr^3\,\psi}{\dr x^{\,a\,3}}}{\left(\frac{\dr^2\,\psi}{\dr U^{\,a\,2}}\right)^2}
\frac{\partial\cH}{\partial U^{\,b}}
+\frac{1}{\frac{\dr^2\,\psi}{\dr U^{\,a\,2}}}
\frac{\partial^2\cH}{\partial U^{\,b}\partial U^{\,a}}\,\right].
$$
Combining these terms, one can write   
$$ 
\kappa_{\,\Psi}^{\,\cH}
=-\,\sum_{a}\frac{1}{\left(\frac{\dr^2\psi}{\dr U^{\,a\,2}}\right)^2}
\left[\,
-\,\frac{1}{2}\frac{\dr^3\psi}{\dr U^{\,a\,3}}
\frac{\partial\cH}{\partial U^{\,a}}
+\frac{\dr^2\psi}{\dr U^{\,a\,2}}
\frac{\partial^2\cH}{\partial U^{\,a\,2}}\,\right].
$$
Since $\cH=\cH(V)$, one has 
$$
\frac{\partial\cH}{\partial U^{\,a}}
=\sum_{b}\frac{\partial\cH}{\partial V_{\,b}}
\frac{\partial V_{\,b}}{\partial U^{\,a}}
=g_{\,aa}^{\,\Psi}\frac{\partial\cH}{\partial V_{\,a}}
=\frac{\dr^2\,\psi}{\dr U^{\,a\,2}}\frac{\partial\cH}{\partial V_{\,a}},
$$
and 
\beqa
\frac{\partial^2\cH}{\partial U^{\,a\,2}}
&=&\frac{\partial}{\partial U^{\,a}}\left(\,
g_{\,aa}^{\,\Psi}\frac{\partial\cH}{\partial V_{\,a}}\,\right)
=\frac{\partial g_{\,aa}^{\,\Psi}}{\partial U^{\,a}}
\frac{\partial\cH}{\partial V_{\,a}}
+ g_{\,aa}^{\,\Psi}\frac{\partial}{\partial U^{\,a}}\left(
\frac{\partial\cH}{\partial V_{\,a}}
\right)
=\frac{\partial g_{\,aa}^{\,\Psi}}{\partial U^{\,a}}
\frac{\partial\cH}{\partial V_{\,a}}
+ \sum_bg_{\,aa}^{\,\Psi}g_{\,ab}^{\,\Psi}
\frac{\partial^2\cH}{\partial V_{\,b}\partial V_{\,a}}
\non\\
&=&\frac{\partial g_{\,aa}^{\,\Psi}}{\partial U^{\,a}}
\frac{\partial\cH}{\partial V_{\,a}}
+ (\,g_{\,aa}^{\,\Psi}\,)^{\,2}
\frac{\partial^2\cH}{\partial V_{\,a}^{\,2}}
=\frac{\dr^3\psi}{\dr U^{\,a\,3}}
\frac{\partial\cH}{\partial V_{\,a}}
+ \left(\,\frac{\dr^2\psi}{\dr U^{\,a\,2}}\,\right)^{\,2}
\frac{\partial^2\cH}{\partial V_{\,a}^{\,2}}.
\non
\eeqa
Thus, one arrives at 
\beqa
\kappa_{\,\Psi}^{\,\cH}
&=&-\,\sum_{a}\frac{1}{\left(\frac{\dr^2\psi}{\dr U^{\,a\,2}}\right)^2}
\left[\,
-\,\frac{1}{2}\frac{\dr^3\psi}{\dr U^{\,a\,3}}
\left(\frac{\dr^2\,\psi}{\dr U^{\,a\,2}}\frac{\partial\cH}{\partial V_{\,a}}
\right)
+\frac{\dr^2\psi}{\dr U^{\,a\,2}}
\left(\frac{\dr^3\psi}{\dr U^{\,a\,3}}
\frac{\partial\cH}{\partial V_{\,a}}
+ \left(\,\frac{\dr^2\psi}{\dr U^{\,a\,2}}\,\right)^{\,2}
\frac{\partial^2\cH}{\partial V_{\,a}^{\,2}}
\right)
\,\right]
\non\\
&=&
-\,\sum_{a}\frac{1}{\frac{\dr^2\psi}{\dr U^{\,a\,2}}}
\left[\,
\frac{1}{2}\frac{\dr^3\psi}{\dr U^{\,a\,3}}
\frac{\partial\cH}{\partial V_{\,a}}
+ \left(\,\frac{\dr^2\psi}{\dr U^{\,a\,2}}\,\right)^{\,2}
\frac{\partial^2\cH}{\partial V_{\,a}^{\,2}}
\,\right].
\label{phase-space-compressibility-homogeneous-case}
\eeqa

\section{Examples}
\label{section-examples}
In this section, after introducing the sigmoid function,  
the two examples are focused. 

As an activation function, \fr{Psi-sum-psi} 
is often focused in the literature due to simplicity, 
$$ 
\Psi(U)
=\sum_{a}\psi(U^{\,a}).
$$
It follows that the matrix $(g_{\,ab}^{\,\Psi})$ is diagonal. 

In what follows geometric objects and 
the phase space compressibility are calculated for this case. 

\subsection{Sigmoid function}
Choose the function $\psi$ as 
$$
\psi(x)
=\ln(\,1+\e^{\,x}\,).
$$
This is referred to as the soft plus function, and this 
choice leads to the sigmoid function as shown below. 
From this  $\psi$, one can derive various quantities.
First introduce 
$$
x^{\,*}
:=\frac{\dr\psi}{\dr x}
=\frac{\e^{\,x}}{1+\e^{\,x}},
$$
where the right hand side of the equation above 
is referred to as the sigmoid function.
Then,  
$$
\frac{\dr x^{\,*}}{\dr x}
=\frac{\dr^2\psi}{\dr x^{\,2}}
=\frac{\e^{\,x}}{(1+\e^{\,x})^2}
=x^{\,*}(1-x^{\,*}),
$$
and 
$$
\frac{\dr^2 x^{\,*}}{\dr x^2}
=\frac{\dr^3\psi}{\dr x^{\,3}}
=\frac{\e^{\,x}(1-\e^{\,x})}{(1+\e^{\,x})^3}
=\frac{\dr x^{\,*}}{\dr x}(1-2x^{\,*})
=x^{\,*}(1-x^{\,*})(1-2x^{\,*}).
$$

The Legendre transform of $\psi$,  
$$
\psi^{\,*}(x^{\,*})
=\left[\,xx^{\,*}-\psi(x)\,\right]_{x=x(x^{\,*})},
$$
is calculated as follows. Solving 
$$
x^{\,*}
=\frac{\dr\psi}{\dr x}
=\frac{\e^{\,x}}{1+\e^{\,x}},\qquad 
(\,0<x^{\,*}<1\,)
$$
for $x$, one has
$$
x=\ln\left(\frac{x^{\,*}}{1-x^{\,*}}\right),\qquad
\mbox{and}\qquad 
\e^{\,x}
=\frac{x^{\,*}}{1-x^{\,*}}.
$$
Then,   
$$
\psi^{\,*}(x^{\,*})
=x^{\,*}\ln x^{\,*}+(1-x^{\,*})\ln(1-x^{\,*}),
$$
from which 
$$
x=\frac{\dr\psi^{\,*}}{\dr x^{\,*}}
=\ln x^{\,*}-\ln\left(1-x^{\,*}\right),\qquad\mbox{and}\qquad
\frac{\dr x}{\dr x^{\,*}}
=\frac{1}{x^{\,*}}+\frac{1}{1-x^{\,*}}.
$$

The components of the metric tensor field are obtained as 
$$
g_{\,ab}^{\,\Psi}
=\delta_{\,ab}\frac{\e^{\,U^{\,a}}}{(1+\e^{\,U^{\,a}})^2}
=\delta_{\,ab}V_{\,a}(1-V_{\,a}),\qquad
\mbox{and}\qquad
g_{\,\Psi}^{\,ab}
=\delta^{\,ab}\frac{(1+\e^{\,U^{\,a}})^2}{\e^{\,U^{\,a}}}
=\delta^{\,ab}\left(
\frac{1}{V_{\,a}}+\frac{1}{1-V_{\,a}}
\right).
$$

\subsection{Example\,\ref{example-simplest}}
The case of Example\,\ref{example-simplest} 
in Section\,\ref{section-model} is considered here, where 
$\cH(V)=\sum_a\sum_b\delta^{\,ab}\,V_{\,a}V_{\,b}/2$ has been chosen.
 
Substituting this $\cH$ and 
$$
\frac{\partial\cH}{\partial V_{\,a}}
=\sum_{b}\delta^{\,ab}\,V_{\,b}
=\sum_{b}\delta^{\,ab}\,\frac{\dr\psi}{\dr U^{\,b}},\qquad
\mbox{and}\qquad
\frac{\partial^{2}\cH}{\partial V_{\,a}^{\,2}}
=1,
$$
into \fr{phase-space-compressibility-homogeneous-case}, one has
$$
\kappa_{\,\Psi}^{\,\cH}
=-\,\sum_{a}\frac{1}{\frac{\dr^2\psi}{\dr U^{\,a\,2}}}
\left[\,
\frac{1}{2}\frac{\dr^3\psi}{\dr U^{\,a\,3}}
\,\frac{\dr\psi}{\dr U^{\,a}}
+ \left(\,\frac{\dr^2\psi}{\dr U^{\,a\,2}}\,\right)^{\,2}
\,\right].
$$

If $\psi$ is the sigmoid function, then
$$
\kappa_{\,\Psi}^{\,\cH}
=-\,\sum_{a}\left(
\frac{3}{2}-\frac{\e^{\,U^{\,a}}}{2}
\right)\frac{\e^{\,U^{\,a}}}{(1+\e^{\,U^{\,a}})^{\,2}}.
$$
Notice that 
$$
\left.\kappa_{\,\Psi}^{\,\cH}\right|_{U=0}
=-\frac{n}{4},
$$
with $n=\sum_{a}1$.
\subsection{Example\,
\ref{example-Hopfield}}
The case of Example\,
\ref{example-Hopfield}  
is considered here.

Substituting the differentiation of \fr{example-Cohen-Grossberg-H}, 
$$
\frac{\partial\cH}{\partial V_{\,a}}
=-\,\left[\,
\sum_bJ^{\,ab}V_{\,b}-\frac{U^{\,a}}{R_{\,a}}+I_{\,\ext}^{\,a}\,\right],\quad
\mbox{and}\quad
\frac{\partial^2\cH}{\partial V_{\,a}^{\,2}}
=-\,J^{\,aa}+\frac{1}{R_{\,a}}\frac{\dr U^{\,a}}{\dr V_{\,a}}
=-\,J^{\,aa}+\frac{1}{R_{\,a}}\frac{1}{V_{\,a}(1-V_{\,a})},
$$
into \fr{phase-space-compressibility-homogeneous-case}, one has
$$
\kappa_{\,\Psi}^{\,\cH}
=
-\,\sum_{a}\frac{1}{\frac{\dr^2\psi}{\dr U^{\,a\,2}}}
\left[\,
\frac{1}{2}\frac{\dr^3\psi}{\dr U^{\,a\,3}}
\left(-\,\sum_bJ^{\,ab}V_{\,b}+\frac{U^{\,a}}{R_{\,a}}-I_{\,\ext}^{\,a}\right)
- \left(\,\frac{\dr^2\psi}{\dr U^{\,a\,2}}\,\right)^{\,2}
\left(\,J^{\,aa}-\frac{1}{R_{\,a}}\frac{1}{V_{\,a}(1-V_{\,a})}\,\right)
\,\right],
$$
with $V_{\,a}=\dr\psi/\dr U^{\,a}$.

If $\psi$ is the sigmoid function, then 
$$
\kappa_{\,\Psi}^{\,\cH}
=-\,\sum_{a}
\left[\,
\frac{1-\e^{\,U^{\,a}}}{2(1+\e^{\,U^{\,a}})}
\left(-\,\sum_bJ^{\,ab}V_{\,b}+\frac{U^{\,a}}{R_{\,a}}-I_{\,\ext}^{\,a}\right)
- \,\frac{\e^{\,U^{\,a}}}{(1+\e^{\,U^{\,a}})^2}
\left(\,J^{\,aa}-\frac{1}{R_{\,a}}\frac{1}{V_{\,a}(1-V_{\,a})}\,\right)
\,\right].
$$
This further reduces by the use of 
$$
\frac{\dr^2\psi}{\dr U^{\,a}}
=\frac{\dr V_{\,a}}{\dr U^{\,a}}
=\frac{\e^{\,U^{\,a}}}{(1+\e^{\,U^{\,a}})^2}
=V_{\,a}(1-V_{\,a}),
$$
to 
$$
\kappa_{\,\Psi}^{\,\cH}
=-\,\sum_{a}
\left[\,
\frac{1-\e^{\,U^{\,a}}}{2(1+\e^{\,U^{\,a}})}
\left\{-\,\sum_bJ^{\,ab}V_{\,b}+\frac{U^{\,a}}{R_{\,a}}-I_{\,\ext}^{\,a}\right\}
- \,\frac{\e^{\,U^{\,a}}}{(1+\e^{\,U^{\,a}})^2}
J^{\,aa}+\frac{1}{R_{\,a}}
\,\right],\quad\mbox{with}\quad
V_{\,a}
=\frac{\e^{\,U^{\,a}}}{1+\e^{\,U^{\,a}}}.
$$
For the steady state, $\dot{U}^{\,a}=0$ for all $a$, the term 
$\{\cdots\}$ above 
vanishes. In the steady state, where 
the self-coupling terms vanish, $J^{\,aa}=0$ for all $a$, one has   
$$
\left.\kappa_{\,\Psi}^{\,\cH}\right|_{\mathrm{steady}}
=-\,\sum_{a}\frac{1}{R_{\,a}}.
$$

\section{Conclusions}
\label{section-conclusion}
This paper has offered a viewpoint  
that a class of  dynamical systems modeling deterministic neural networks 
can be described 
in terms of Hessian and information geometries. 
In this formulation the phase compressibility is shown to be equal to 
the negative of Laplace operator acting on a Lyapunov function. 
Also some explicit forms of them have been shown.

There are some potential future works that follow from this paper.
One is to study a class of stochastic dynamical systems, 
since neural networks are often modeled by 
stochastic models. Then it is of interest to see if the present approach 
can be applied to such stochastic models. 
Another one is to consider a relation between the present formulation on a  
dually flat space and contact Hamiltonian systems in a contact manifold 
\cite{Goto2015,Bravetti2019}. Since the generalized 
Hopfield model 
$(\cM,\Psi,\cH)$ is similar to a Hamiltonian system $(\cS,\omega,H)$
with $\cS$ some even-dimensional manifold, $\omega$ a symplectic $2$-form, 
and $H$ a Hamiltonian function \cite{Silva2008},  
one may explore relations between them. 

We believe that the elucidation of these remaining questions 
together with the present study will develop 
geometric theory of neural network models, and 
that of dynamical systems.  
\section*{Acknowledgments}
The author 
is grateful to 
Hideitsu Hino for 
support for this research. 
Also the author is partially supported 
by JSPS (KAKENHI) grant number 19K03635 and by JST CREST JPMJCR1761.
\appendix

\end{document}